\def\BState{\State\hskip-\ALG@thistlm}
\newtheorem{theorem}{Theorem}
\newtheorem{lemma}{Lemma}
\newtheorem{assumption}{Assumption}
\newtheorem{remark}{Remark}
\title{\LARGE \bf Finite Time Encryption Schedule in the Presence of an Eavesdropper with Operation Cost }
\author{\thanks{The work by L. Huang and L. Shi is supported by a Hong Kong ITC research fund ITS/066/17FP-A.}Lingying Huang*,\thanks{* Department of Electronic Engineering,
		Hong Kong University of Science and Technology, Clear Water Bay, Kowloon, Hong Kong
		{\tt\small lhuangaq@connect.ust.hk; eesling@ust.hk}	} Alex S. Leong$ ^{+} $\thanks{+ Department of Electrical Engineering (EIM-E), Paderborn University,
		Paderborn, Germany {\tt\small alex.leong@upb.de; dquevedo@ieee.org}}, Daniel E. Quevedo$ ^{+} $, Ling Shi*}
\begin{document}

	\maketitle
	\thispagestyle{empty}
	\pagestyle{empty}
	
	\begin{abstract}
		In this paper, we consider a remote state estimation problem in the presence of an eavesdropper. A smart sensor takes measurement of a discrete linear time-invariant (LTI) process and sends its local state estimate through a wireless network to a remote estimator.  An eavesdropper can overhear the sensor transmissions with a certain probability. To enhance the system privacy level, we propose a novel encryption strategy to minimize a linear combination of the expected error covariance at the remote estimator and the negative of the expected error covariance at the eavesdropper, taking into account the cost of the encryption process.  We prove the existence of an optimal deterministic and Markovian policy for such an encryption strategy over a finite time horizon. Two situations, namely, with or without knowledge of the eavesdropper estimation error covariance are studied and the optimal schedule is shown to satisfy the threshold-like structure in both cases. Numerical examples are given to illustrate the results.
		
	\end{abstract}

\section{Introduction}
Cyber-physical systems (CPSs) integrate sensing, computing and communication capabilities  with physical systems\cite{poovendran2012special}. The introduction of a wireless network enables CPSs to be applied to a wide range of applications. However, it also introduces more challenges to protect privacy. Since information in CPSs is transmitted through unprotected wireless networks in most cases, CPSs are often vulnerable to unauthorized users including malicious attackers. A leakage of confidential information will result in severe consequences, e.g., disclosure of customers' privacy and economic losses\cite{langner2011stuxnet},\cite{cardenas2008secure}.

The most common method to improve system confidentiality is encrypting transmitted packets, e.g., symmetric-key encryption and public-key encryption. Only the legitimate user has the ability to decrypt messages, blocking the access from other adversaries. Reason \cite{reason2001end} proposed that encrypted information should satisfy the avalanche effect property. This property leads to the increase of the average mean squared error at the legitimate receiver as it enlarges the one-bit-error. In addition, cryptography requires more storage and computation services, adding extra burden. Hence, there is a trade-off between the privacy level and the estimation quality as well as a privacy-preserving cost. 

A fairly large body of literature exists in studying the independent and identically distributed  (i.i.d.) packets losses. In this case, the throughput in  wireless network serves as an evaluation indicator in legitimate estimation quality. Haleem et al. \cite{haleem2007opportunistic} presented a mathematical model to capture the security-throughput trade-off. Aysal and Barner \cite{aysal2008sensor} derived an optimal estimator of a deterministic signal using stochastic bit flipping and analyzed the effect. 

On the other hand, it is more general and more difficult to consider that collected packets are measurement vectors of a dynamical system when there is an eavesdropper. Wiese et al. \cite{wiese2016secure} showed that by applying sufficiently large coding length, one could make the estimation error of the eavesdropper unbounded while the legal receiver still has a bounded error covariance for unstable systems (perfect secrecy). Tsiamis et al. \cite{tsiamis2017state} concluded that by exploiting packet erasures policy, perfect secrecy is achieved when the arrival rate of the legitimate receiver is larger than that of eavesdroppers. They also showed in \cite{tsiamis2017state2} that perfect secrecy is achieved with at least one occurrence of the essential event, when the user receives the packet while the eavesdropper fails to intercept it. Furthermore, Leong et al. \cite{leong2017remote} proposed a policy to erase packets based on the estimation error where the system can achieve perfect secrecy even when the eavesdropper has greater probability to obtain information.

Different from \cite{leong2017remote}, we study a more general encryption strategy. We formulate a novel mathematical model to illustrate the effect of encryption strategy (Fig. \ref{fig_sim}) considering a remote estimator and an eavesdropper. Based on this model, we derive structural results on the optimal encryption schedule with (Theorem \ref{t3}), or without (Theorem \ref{t6}) knowledge of the eavesdropper's estimation error covariance. We also introduce the influence of the encryption cost into this optimization problem. With more decision variables, we prove that the threshold structure still holds in both situations (Theorems \ref{t3},\ref{t6}).

This paper is organized as follows. Section \ref{s2} establishes the system model. After analyzing the remote estimator's and the eavesdropper's performance, we introduce the mathematical formulation of the main problem. Section \ref{se3} proves the existence and the structure of an optimal deterministic and Markovian policy in a finite time horizon with or without knowledge of the eavesdropper's estimation error covariance. Numerical simulations are given in Section \ref{s4}. Section \ref{s5} draws conclusions.


\textit{Notation:} $ \mathbb{N} $ is the set of natural numbers. $ \mathbb{R} $ and $ \mathbb{R}^{n} $ represent the set of real numbers and $ n- $dimensional real column vectors. For a matrix $ X $, $  X' $ and $ tr(X)$
denote its transpose and trace, respectively. When $ X $ is a positive semidefinite
matrix, it is written as $ X\geq0 $. The notation $ \mathbb{P}(\cdot) $ and $  \mathbb{E}[\cdot]  $ are the probability and expectation  of a random matrix, respectively,
and  $ \mathbb{E}[\cdot|\cdot]  $ is its conditional expectation. 
 For functions $ f ,f_{1}$ and $f_{2} $, $ f_{1}\circ f_{2} $ is defined
as $ f_{1}\circ f_{2}=f_{1}( f_{2}(X)) $  and $ f^{ k} $ is defined as $ f^{ k}(X)=\underbrace{f\circ f\circ \ldots \circ f}_{k\text{ times}}(X)  $,
 with $ f^{ 0}(X) = X $. A function $ F(\cdot) $ is increasing if $ X\leq Y \Rightarrow F(X)\leq F(Y)$. A function $ F(\cdot) $ is decreasing if $ X\leq Y \Rightarrow F(X)\geq F(Y)$.

\section{System model and problem formulation}\label{s2}
\begin{figure}[!t]
\centering
\includegraphics[width=3.3in]{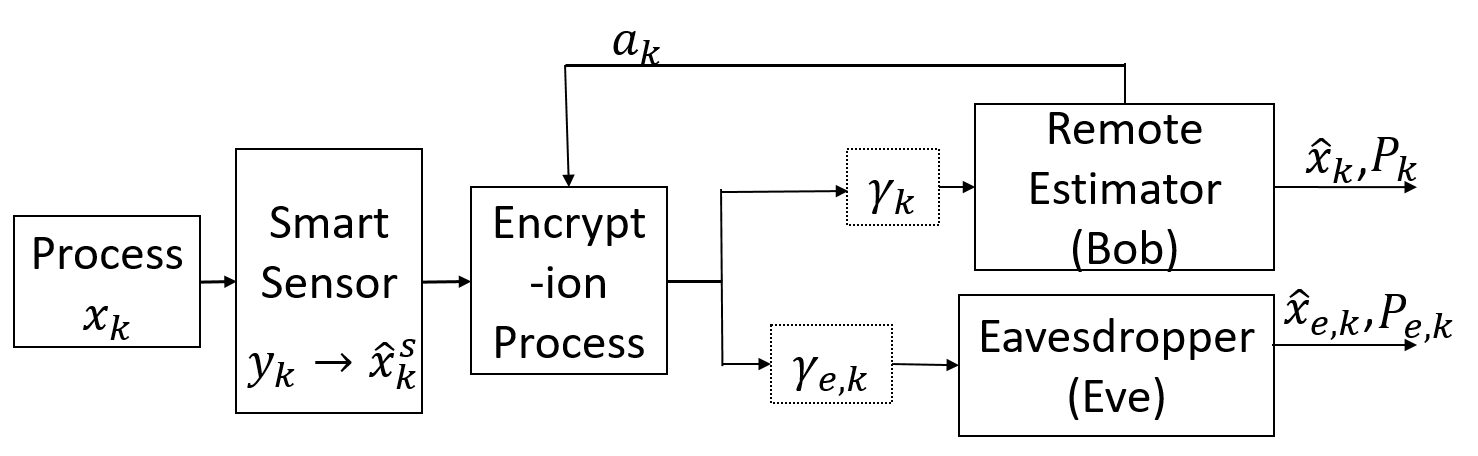}
\caption{System structure.}
\label{fig_sim}
\end{figure}

\subsection{System Model}
Consider the linear time-invariant (LTI) system in Fig 1, which is given as follows
\begin{equation}
\begin{split}
x_{k+1}&=Ax_{k}+w_{k},\\
y_{k}&=Cx_{k}+v_{k},
\end{split}
\end{equation}
	where $ k\in \mathbb{N} $ is the time index, $ x_{k}\in \mathbb{R}^{n} $ is the system state, $  y_{k} \in{\mathbb{R}^{m}}  $ is the measurement vector taken by the sensor at
time $ k $, $ w_{k} \in \mathbb{R}^{n} $ and $ v_{k} \in \mathbb{R}^{m} $ are two i.i.d. zero-mean Gaussian random
noises with covariances $ Q \geq 0$ and $ R > 0 $, respectively.
The initial state $  x_{0} $ is a zero-mean Gaussian random vector that
is uncorrelated with $ w_{k} $ or $ v_{k} $, and has covariance $ \Pi_{0} \geq 0 $. We further
assume that ($ A $,$ \sqrt{Q} $) is controllable and ($  A, C $) is observable.

A smart sensor equipped with computation and memory capacity is capable of running a local Kalman filter. The sensor transmits quantities $ \hat{x}^{s}_{k} $ to a remote estimator (Bob). 
According to Anderson and Moore \cite{anderson1979optimal}, since ($ A $,$ \sqrt{Q} $) is controllable and ($ A, C $) is observable, posteriori estimation error covariance $ P_{k}^{s}  $ converges exponentially fast to a steady state $ P^{*} $. For simplicity,  we assume  
$ 
P_{k}^{s}=P^{*}$.

Let $ a_{k} \in \{0, 1\}  $ be a decision variable. When $ a_{k}=0 $, the sensor transmits its local state estimate $ \hat{x}_k^{s} $  to the remote estimator. Otherwise, when $ a_{k}=1 $,  the local estimate $ \hat{x}_k^{s} $  is first encrypted before transmitting. 
The decision variable $ a_{k} $ is determined at the remote estimator, which
is assumed to have more computational capabilities than the
sensor. It uses the information available at time $ k-1 $, and then feeds
back to the sensor before transmission at time $ k $.

 We use $ \gamma_{k} $ to represent whether the remote estimator receives $ \hat{x}_k^{s} $ successfully at time $ k $, i.e., $ \gamma_{k}=1 $ indicates that the local estimate is received successfully by the remote estimator at time $ k $ and $ \gamma_{k}=0 $ otherwise. We make the following assumption about effects of the encryption process on the packet arrival rate.

\begin{assumption}
	The packet arrival rate is memoryless and is only affected if the transmitted messages are encrypted,  i.e.,  $ \{\gamma_{k}\}$ is i.i.d.. It is assumed that encryption contributes an impact factor $ \epsilon_{1}  (0\leq \epsilon_{1}\leq 1)  $ to the arrival rate. The following equality holds for any $ k\geq 1 $,
	\begin{equation}
	\begin{split}
	\mathbb{P}(\gamma_{k}=1)=\left\lbrace
	\begin{array}{l}
	\lambda, \text{ if }a_{k}=0 ,\\
	\epsilon_{1}\lambda, \text{ if }a_{k}=1.
	\end{array}\right.
	\end{split}
	\end{equation}
	\begin{remark}
		The impact factor can be determined by the type of the encryption. A large number of published studies focus on specific impact factor of different encryption methods. For example, in paper \cite{leong2017remote} , the impact factor of packet erasure method is 0. Meanwhile, paper \cite{reason2001end} showed that the perceptual degradation in subjective quality caused by confidentiality closely follows the quantitative degradation in bit-error rate. Therefore, if the packet length is known to the remote estimator in advance, the impact factor is deterministic and we simplify it as a general constant $ \epsilon_{1} $.
	\end{remark}
\end{assumption}

There exists an eavesdropper (Eve) who can overhear the sensor transmission. Let $ \gamma_{e,k} $ be a random variable such that $ \gamma_{e,k}=1 $ if $ \hat{x}_k^{s} $ is overheard and decrypted successfully by the eavesdropper, and $ \gamma_{e,k}=0 $ otherwise.  We make the following assumption about the influence of successful eavesdropping rate (which means the message is obtained and decrypted successfully) at the eavesdropper.
\begin{assumption}
	The successful eavesdropping rate for the eavesdropper is memoryless. If the message is encrypted, the eavesdropper has fixed probability $ \epsilon_{2}  (0\leq \epsilon_{2}\leq 1) $ to decrypt it. Therefore, the following equality holds for $ k\geq 1 $,
	\begin{equation}
	\begin{split}
	\mathbb{P}(\gamma_{e,k}=1)=\left\lbrace
	\begin{array}{l}
	\lambda_{e}, \text{ if }a_{k}=0, \\
	\epsilon_{2}\lambda_{e}, \text{ if }a_{k}=1.
	\end{array}\right.
	\end{split}
	\end{equation}
	The processes $ \{\gamma_{k}\}$ and  $ \{\gamma_{e,k}\}$  are assumed to be mutually independent. 	
	\begin{remark}
		To make the decryption probability memoryless, if we use key to encrypt the messages, we need to change the key from time to time from being deciphered by the eavesdropper.
	\end{remark}
\end{assumption}

It is assumed that the remote estimator knows the decision variable $ a_{k} $ and whether the transmission was successful or not, i.e., $ \gamma_{k} $. According to \cite{sinopoli2004kalman}, the remote estimator's state estimate $ \hat{x}_{k} $ and the corresponding error covariance $ P_{k} $ at time $ k $ are given by
\begin{equation}
	\begin{split}
		(\hat{x}_{k},P_{k})=\left\lbrace
		\begin{array}{l}
		(A\hat{x}_{k-1},h(P_{k-1})), \text{ if }\gamma_{k}=0 ,\\
		(\hat{x}^{s}_{k},P^{*}), \text{ if }\gamma_{k}=1,
		\end{array}\right.		
	\end{split}
\end{equation}
where the Lyapunov operator $ h(X)\triangleq AXA+Q $.

Similarly, the eavesdropper knows if it has eavesdropped successfully, i.e., $ \gamma_{e,k} $. The state estimate $ \hat{x}_{e,k} $ and error covariance $ P_{e,k} $ at time $ k $ are
\begin{equation}
\begin{split}
(\hat{x}_{e,k},P_{e,k})=\left\lbrace
\begin{array}{l}
(A\hat{x}_{e,k-1},h(P_{e,k-1})), \text{ if }\gamma_{k}=0 ,\\
(\hat{x}^{s}_{k},P^{*}), \text{ if }\gamma_{k}=1.
\end{array}\right.		
\end{split}
\end{equation}

\begin{lemma} (\cite{shi2010kalman} )
	For any $ k_{1}\geq k_{2}\geq0 $,	$ h^{k_{1}}(P^{*})\geq h^{k_{2}}(P^{*}) $.
	Therefore, $ tr(h^{k_{1}}(P^{*}))\geq tr(h^{k_{2}}(P^{*}) )$.
\end{lemma}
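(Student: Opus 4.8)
The plan is to reduce the statement to two ingredients: (i) the one-step inequality $h(P^{*})\geq P^{*}$, and (ii) the monotonicity of the operator $h$ in the positive semidefinite order; the rest is an induction on the exponent together with the elementary fact that the trace is monotone on the cone of positive semidefinite matrices.

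\emph{Step 1 (base inequality).} Recall that $P^{*}$ is the fixed point of the steady-state Kalman filtering recursion, i.e. it satisfies the algebraic Riccati equation $P^{*}=h(P^{*})-h(P^{*})C'(C h(P^{*})C'+R)^{-1}C h(P^{*})$. The subtracted term has the form $MC'(CMC'+R)^{-1}CM$ with $M=h(P^{*})\geq0$ and $R>0$, hence it is positive semidefinite. Therefore $P^{*}\leq h(P^{*})$, which is the case $k_{1}=1,\,k_{2}=0$. Intuitively, $h(P^{*})$ is the covariance obtained by propagating $P^{*}$ one step without incorporating a new measurement, and dropping a measurement cannot decrease the error covariance.

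\emph{Step 2 (monotonicity and induction).} The operator $h$ is increasing: if $X\geq Y$ then $X-Y\geq0$, so $A(X-Y)A'\geq0$, i.e. $h(X)-h(Y)=A(X-Y)A'\geq0$. Combining this with the base inequality, an induction on $k$ gives $h^{k+1}(P^{*})=h(h^{k}(P^{*}))\geq h(h^{k-1}(P^{*}))=h^{k}(P^{*})$ for every $k\geq1$, the base case $h^{1}(P^{*})\geq h^{0}(P^{*})=P^{*}$ being Step 1. Hence $\{h^{k}(P^{*})\}_{k\geq0}$ is nondecreasing in the positive semidefinite order, and for $k_{1}\geq k_{2}\geq0$ we chain $h^{k_{1}}(P^{*})\geq h^{k_{1}-1}(P^{*})\geq\cdots\geq h^{k_{2}}(P^{*})$.

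\emph{Step 3 (trace).} From $h^{k_{1}}(P^{*})-h^{k_{2}}(P^{*})\geq0$ we get $tr(h^{k_{1}}(P^{*})-h^{k_{2}}(P^{*}))\geq0$, since the trace of a positive semidefinite matrix is the sum of its nonnegative eigenvalues; by linearity of the trace this is exactly $tr(h^{k_{1}}(P^{*}))\geq tr(h^{k_{2}}(P^{*}))$. The only genuinely non-routine step is the base inequality $h(P^{*})\geq P^{*}$, which relies on the Riccati structure of $P^{*}$ rather than on $h$ alone; everything downstream is induction plus monotonicity of $A(\cdot)A'$ and of the trace. Since the statement is quoted from \cite{shi2010kalman}, one may alternatively simply invoke that reference, but the argument above keeps the excerpt self-contained.
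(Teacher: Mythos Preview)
Your argument is correct. The paper does not actually supply a proof of this lemma; it simply attributes the result to \cite{shi2010kalman} and moves on. So there is no ``paper's own proof'' to compare against beyond the bare citation.

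What you have written is the standard self-contained justification: the Riccati identity for the steady-state posterior covariance gives $P^{*}=h(P^{*})-h(P^{*})C'(Ch(P^{*})C'+R)^{-1}Ch(P^{*})\leq h(P^{*})$, the Lyapunov operator $h$ is order-preserving because $X\geq Y$ implies $A(X-Y)A'\geq0$, and then induction plus trace monotonicity finish the job. This is exactly the content of the cited reference, so your proposal and the paper's implicit proof (via citation) coincide in substance; you have merely unpacked it. One small remark: the paper writes $h(X)\triangleq AXA+Q$ rather than $AXA'+Q$, which is almost certainly a typographical slip (and is harmless in their numerical example since $A$ is symmetric there); your use of $A(\cdot)A'$ is the correct general form and is what is needed for the congruence argument in Step~2.
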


Define  
$ \mathbf{S}\stackrel{\mathrm{\Delta}}{=} \{P^{*},h(P^{*}),h^2(P^{*})\ldots\} $ which  consists of all possible values of $ P_{k} $ and $ P_{e,k} $.
From Lemma 1, there is a total ordering 
$ P^{*}\leq h(P^{*})\leq \cdots $, thus $ \mathbf{S}  $ is a total order set.

\subsection{Problem of interest}
Considering the finite time horizon, our goal is to minimize a linear combination of the expected error covariance at the remote estimator and the
negative of the expected error covariance at the eavesdropper, while taking into account the operation cost of the encryption process. The integrated cost $ J_{k} $ considering the privacy level, the system performance and the encryption cost is 
\begin{equation}\label{e9}
	\min\limits_{a_{k}\in \{0,1\}}J_{k}\triangleq\sum\limits_{k=1}^{N}\mathbb{E}[\beta tr(P_{k})-(1-\beta)tr(P_{e,k})+a_{k}\mathcal{C}].
\end{equation} 
The coefficient $ \beta\in(0,1) $ weighs the importance of the error covariance of the system compared with that of the eavesdropper. With larger $ \beta $, it means that maintaining the system performance is of more importance than minimizing the information leakage, and vice versa. The parameter $ \mathcal{C} $ is the  normalized total cost of the encryption process.

\begin{remark}
	Packet erasure presented in paper \cite{leong2017remote} can be viewed as a special encryption strategy in our model with $ \epsilon_{1}=\epsilon_{2}=0 $ and $ \mathcal{C} =0 $. In our subsequent analysis, we will show that optimal policies are still of threshold-type. 
\end{remark}

\section{Finite Time Horizon MDP Framework}\label{se3}
\subsection{Eavesdropper State Known at Remote Estimator}\label{s3}
We first consider the easier case where the eavesdropper error covariance is known at the remote estimator. We derive a discrete time Markov decision process (MDP) problem.
\begin{enumerate}
	\item The state  $s_{k}\triangleq(P_{k-1},P_{e,k-1}) $ at time $ k $  belongs to the state space $ \mathbb{S} \subset \mathbf{S}\times\mathbf{S}$. 
	\item The action $ a_{k}\in\{0,1\} $ belongs to the action space $ \mathbb{A}$.
	\item The state transition probability distribution $\mathbb{P}(s'|s,a) $ is time homogeneous, where  $ s',s\in \mathbb{S}$  , $ a\in\mathbb{A} $ by dropping the time index and $ s' $ is next state when taking action $ a $ at current state $  s  $. Denote $ s_{00}\triangleq(h(P),h(P_{e}) ),s_{01}\triangleq(h(P),P^{*} ),s_{10}\triangleq(P^{*},h(P_{e})),s_{11}\triangleq(P^{*},P^{*})$ and $ s=(P,P_{e}) $, then we obtain
	\begin{equation}
	\begin{split}
	\mathbb{P}_{00}(0)\triangleq\mathbb{P}(s_{00}|s,0)&=(1-\lambda)(1-\lambda_{e}),\\
	\mathbb{P}_{01}(0)\triangleq\mathbb{P}(s_{01}|s,0)&=(1-\lambda)\lambda_{e},\\
	\mathbb{P}_{10}(0)\triangleq\mathbb{P}(s_{10}|s,0)&=\lambda(1-\lambda_{e}),\\
	\mathbb{P}_{11}(0)\triangleq\mathbb{P}(s_{11}|s,0)&=\lambda\lambda_{e},\\
	\mathbb{P}_{00}(1)\triangleq\mathbb{P}(s_{00}|s,1)&=(1-\epsilon_{1}\lambda)(1-\epsilon_{2}\lambda_{e}),\\
	\mathbb{P}_{01}(1)\triangleq\mathbb{P}(s_{01}|s,1)&=(1-\epsilon_{1}\lambda)\epsilon_{2}\lambda_{e},\\
	\mathbb{P}_{10}(1)\triangleq\mathbb{P}(s_{10}|s,1)&=\epsilon_{1}\lambda(1-\epsilon_{2}\lambda_{e}),\\
	\mathbb{P}_{11}(1)\triangleq\mathbb{P}(s_{11}|s,1)&=\epsilon_{1}\lambda \epsilon_{2}\lambda_{e}.
	\end{split}	
	\end{equation}
	\item The one-stage cost function at time $ k $ is
	\begin{equation}
	\begin{split}
		&c_{k}(P_{k-1},P_{e,k-1},a_{k})\triangleq a_{k}\mathcal{C}\\		
		&+\mathbb{E}[\beta tr(P_{k})-(1-\beta)tr(P_{e,k})|P_{k-1},P_{e,k-1},a_{k}]
		\\
		&=a_{k}\mathcal{C}+\beta[(a_{k}\epsilon_{1}\lambda+(1-a_{k})\lambda)tr(P^{*})\\
		&+(1-a_{k}\epsilon_{1}\lambda-(1-a_{k})\lambda)tr(h(P_{k-1}))]\\
		&-(1-\beta)[(a_{k}\epsilon_{2}\lambda_{e}+(1-a_{k})\lambda_{e})tr(P^{*})\\
		&+(1-a_{k}\epsilon_{2}\lambda_{e}-(1-a_{k})\lambda_{e})tr(h(P_{e,k-1}))].
	\end{split}	
	\end{equation}
	\begin{remark}
		From Lemma 1, the one-stage cost function $ c_{k}$ increases in $ P_{k-1} $ and decreases in $ P_{e,k-1} $.
	\end{remark} 
\end{enumerate}

By above definitions, problem \eqref{e9} is equal to 
\begin{equation}\label{e13}
\min\limits_{a_{k}\in \{0,1\}}\sum\limits_{k=1}^{N}	c_{k}(P_{k-1},P_{e,k-1},a_{k}).
\end{equation} 

Define the optimality equation (Bellman equation) as
\begin{equation}\label{e1}
\begin{split}
	&V_{k}(P,P_{e})= \min\limits_{a\in \{0,1\}}\{c_{k}(P,P_{e},a)+\mathbb{P}_{00}(a)V_{k+1}(s_{00})+\\
	&\mathbb{P}_{01}(a)V_{k+1}(s_{01})+\mathbb{P}_{10}(a)V_{k+1}(s_{10})+\mathbb{P}_{11}(a)V_{k+1}(s_{11})\},
\end{split}	
\end{equation}
 where $ V_{k}(\cdotp,\cdotp) $ for $ k=1,2,\ldots,N $ is a real valued function and $ V_{N+1}(P,P_{e})=0 $.

\begin{theorem}\label{t1}
 There exists an optimal deterministic Markovian policy to problem \eqref{e13}.
\end{theorem}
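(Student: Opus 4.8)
The plan is to prove the statement by the standard backward-induction (dynamic programming) argument for finite-horizon Markov decision processes, using the two structural facts that the action space $\mathbb{A}=\{0,1\}$ is finite and that the horizon $N$ is finite.

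First I would verify that the value functions $V_k$, $k=1,\dots,N+1$, defined by the recursion \eqref{e1} together with $V_{N+1}\equiv 0$, are well defined, real valued, and finite on $\mathbb{S}$. This is a backward induction on $k$: assuming $V_{k+1}$ is real valued, for each fixed $s=(P,P_e)\in\mathbb{S}$ and each $a\in\{0,1\}$ the quantity $c_k(P,P_e,a)+\sum_{i,j\in\{0,1\}}\mathbb{P}_{ij}(a)V_{k+1}(s_{ij})$ is a finite real number, because $c_k$ is a finite linear combination of traces of elements of $\mathbf{S}$ and each successor $s_{ij}$ lies in $\mathbb{S}$; minimising over the two-point set $\{0,1\}$ then attains its minimum, so $V_k(s)\in\mathbb{R}$. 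For a fixed initial state only finitely many states are reachable within $N$ steps (at each stage each index increases by at most one), so the countable infinity of $\mathbb{S}$ causes no difficulty, and the traces stay bounded over the reachable set --- this is what matters, since the term $-(1-\beta)tr(P_{e,k})$ enters the cost with a negative sign and only finiteness of the horizon prevents $V_k$ from running to $-\infty$.

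Next I would construct the candidate policy. For each $k\in\{1,\dots,N\}$ and each $s=(P,P_e)\in\mathbb{S}$, let $\mu_k(s)\in\{0,1\}$ be a value of $a$ attaining the minimum in \eqref{e1} at $s$ (breaking ties in favour of $a=0$, say). Such a minimiser exists because the minimisation is over two points, and for the same reason there is no measurable-selection issue. The collection $\pi^{\star}=(\mu_1,\dots,\mu_N)$ is then a deterministic Markovian policy.

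Finally I would verify that $\pi^{\star}$ is optimal among all admissible policies, including history-dependent and randomised ones. This is the classical verification lemma: a backward induction on $k$ shows that for every admissible policy $\pi$ and every state $s$ the expected remaining cost $\mathbb{E}_{\pi}\big[\sum_{t=k}^{N}c_t \,\big|\, s_k=s\big]$ is at least $V_k(s)$ --- one peels off the stage-$k$ term using the tower property of conditional expectation and the time-homogeneous transition kernel $\mathbb{P}(s'\mid s,a)$ --- whereas $\pi^{\star}$ attains $V_k(s)$ by its very construction; taking $k=1$ then yields that $\pi^{\star}$ is optimal for \eqref{e13}. I do not expect a genuine obstacle here: the only delicate points are finiteness of the $V_k$ and attainment of the per-stage minima, and both are immediate from finiteness of $\mathbb{A}$ and of the horizon. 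Thus Theorem \ref{t1} is a routine application of finite-horizon dynamic programming, with the real work postponed to the threshold-structure results (Theorems \ref{t3} and \ref{t6}) built upon it.
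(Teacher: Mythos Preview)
Your proposal is correct and takes essentially the same approach as the paper. The paper's own proof is a two-line appeal to Puterman's textbook, noting that over a finite horizon the (reachable) state set and the action set $\mathbb{A}=\{0,1\}$ are finite, so a deterministic Markovian optimal policy exists; you have simply unpacked that citation into the underlying backward-induction argument, and in doing so you are actually more careful than the paper about why the countable set $\mathbf{S}\times\mathbf{S}$ causes no trouble (only finitely many states are reachable in $N$ steps).
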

\begin{proof}
	In a finite time horizon, the state set $ \mathbb{S} $ is finite and the corresponding	action set $ \mathbb{A} $ is finite. As the action set $ \mathbb{A} $ is finite,  there always exists a deterministic and Markovian optimal policy \cite{puterman2014markov}. 	
\end{proof}

Let $ H_{k}=(s_{0},a_{1},\ldots,s_{k-1},a_{k},s_{k}) $ stand for the history information up to time $ k $.
We will make the action $ a_{k+1} $ based on $ H_{k} $ to minimize the total integrated cost. The Markovian property determined by Theorem \ref{t1}  guarantees that the future is independent of the past given the present\cite{puterman2014markov}\cite{bertsekas2005dynamic}. Hence, choosing actions $ a_{k+1} $ based on $ s_{k} $ would be the same  as choosing actions based on $ H_{k} $ and problem \eqref{e13} can be solved in a recursive way as
\begin{equation*}
\begin{split}
&V_{N+1}(P,P_{e})=0,\\
&V_{k}(P,P_{e})=\min\limits_{a\in \{0,1\}}\{c_{k}(P,P_{e},a)+\mathbb{P}_{00}(a)V_{k+1}(s_{00})+\\
&\mathbb{P}_{01}(a)V_{k+1}(s_{01})+\mathbb{P}_{10}(a)V_{k+1}(s_{10})+\mathbb{P}_{11}(a)V_{k+1}(s_{11})\}.
\end{split}	
\end{equation*}

 The following theorem will be used to establish that the optimal solution has a threshold property (Theorem \ref{t3}).

\begin{theorem}\label{t2}
	The optimal value function $ V_{k}(P,P_{e}) $ is an increasing function in $ P $ and a decreasing function in $ P_{e} $.  
\end{theorem}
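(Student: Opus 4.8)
The plan is to prove this by backward induction on the time index $ k $, running from $ k=N+1 $ down to $ k=1 $, with the inductive hypothesis that $ V_{k+1}(\cdot,\cdot) $ is increasing in its first argument and decreasing in its second (monotonicity understood with respect to the total order on $ \mathbf{S} $). The base case is trivial since $ V_{N+1}\equiv 0 $. For the inductive step, I would fix an action $ a\in\{0,1\} $ and examine the bracketed term in the Bellman recursion \eqref{e1},
\begin{equation*}
Q_{k}(P,P_{e},a)\triangleq c_{k}(P,P_{e},a)+\mathbb{P}_{00}(a)V_{k+1}(s_{00})+\mathbb{P}_{01}(a)V_{k+1}(s_{01})+\mathbb{P}_{10}(a)V_{k+1}(s_{10})+\mathbb{P}_{11}(a)V_{k+1}(s_{11}),
\end{equation*}
recalling that $ s_{00}=(h(P),h(P_{e})) $, $ s_{01}=(h(P),P^{*}) $, $ s_{10}=(P^{*},h(P_{e})) $, $ s_{11}=(P^{*},P^{*}) $, so that $ V_{k}(P,P_{e})=\min_{a\in\{0,1\}}Q_{k}(P,P_{e},a) $.

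The facts I would rely on are: (i) the transition probabilities $ \mathbb{P}_{ij}(a) $ do not depend on the state, so perturbing $ P $ or $ P_{e} $ only moves the arguments of $ V_{k+1} $ and not the mixing weights; (ii) the Lyapunov operator $ h $ is increasing; and (iii) the one-stage cost $ c_{k} $ is increasing in $ P $ and decreasing in $ P_{e} $, which follows from Lemma~1. Increasing $ P $ raises only the first coordinates of $ s_{00} $ and $ s_{01} $ (via $ h $), so by the inductive hypothesis $ V_{k+1}(s_{00}) $ and $ V_{k+1}(s_{01}) $ increase while the other two terms are unchanged; together with (iii), $ Q_{k}(\cdot,P_{e},a) $ is then increasing in $ P $ for each $ a $. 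Symmetrically, increasing $ P_{e} $ raises only the second coordinates of $ s_{00} $ and $ s_{10} $, and since $ V_{k+1} $ is decreasing in its second argument, $ Q_{k}(P,\cdot,a) $ is decreasing in $ P_{e} $ for each $ a $. Finally, taking the minimum over the finite action set preserves both properties, because the pointwise minimum of finitely many functions each increasing (resp. decreasing) in a given variable is again increasing (resp. decreasing) in that variable. This closes the induction.

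I expect this to be a routine monotone dynamic-programming argument rather than a hard one; the main points requiring care are the bookkeeping of which successor states carry the argument $ P $ and which carry $ P_{e} $, so that the increasing-in-$ P $ and decreasing-in-$ P_{e} $ claims do not interfere, together with the elementary observation that the $ \min $ operator preserves monotonicity. Since the horizon is finite, no fixed-point or value-iteration convergence argument is needed.
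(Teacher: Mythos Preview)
Your proposal is correct and follows essentially the same backward-induction argument as the paper: both use that $c_{k}$ is monotone in each state coordinate, that $h$ is order-preserving, and that the transition probabilities are state-independent, so the bracketed term in the Bellman equation is monotone for each fixed action. The only cosmetic difference is that the paper handles the $\min$ by evaluating $V_{k}(P,P_{e})$ at the (suboptimal) action $a_{s'}^{*}$ that is optimal for $(P',P_{e})$, whereas you invoke the general fact that a pointwise minimum of monotone functions is monotone; these are equivalent.
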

\begin{proof}
 See Appendix \ref{a1}.
\end{proof}

\begin{theorem}\label{t3}
	 	(1) For a fixed $ P_{e,k-1} $, the optimal solution to problem \eqref{e13} is a threshold policy on $ P_{k-1} $
	\begin{equation}
		a_{k}^{*}(P_{k-1},P_{e,k-1})=\left\lbrace
		\begin{array}{l}
			1, \text{if}\, P_{k-1}\leq h^{m(k)}(P^{*}), \\
			0, \text{otherwise }.
		\end{array}\right. 
	\end{equation}
	where the threshold $ m(k)\in \mathbb{N} $ depends on $ k $ and $ P_{e,k-1} $.
	
	(2) For a fixed $ P_{k-1} $, the optimal solution to problem  \eqref{e13}  is a threshold policy on $ P_{e,k-1} $ 
	\begin{equation}
	a_{k}^{*}(P_{k-1},P_{e,k-1})=\left\lbrace
	\begin{array}{l}
	1, \text{if}\, P_{e,k-1}\geq h^{m_{e}(k)}(P^{*}) ,\\
	0, \text{otherwise }.
	\end{array}\right. 
	\end{equation}
	where the threshold $ m_{e}(k)\in \mathbb{N}  $ depends on $ k $ and $ P_{k-1} $.
\end{theorem}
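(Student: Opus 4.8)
The natural plan is to express the optimal action through the sign of an advantage function and read the threshold off the total ordering of $\mathbf{S}$. Writing $Q_{k}(P,P_{e},a)$ for the bracketed expression in the Bellman equation \eqref{e1} and $\Delta_{k}(P,P_{e})\triangleq Q_{k}(P,P_{e},1)-Q_{k}(P,P_{e},0)$, an optimal policy takes $a_{k}^{*}(P,P_{e})=1$ exactly when $\Delta_{k}(P,P_{e})\le 0$ (ties toward encryption). Since $\mathbf{S}$ is the chain $P^{*}\le h(P^{*})\le\cdots$, statement (1) reduces to showing that, for fixed $P_{e,k-1}$, $P\mapsto\Delta_{k}(P,P_{e})$ is increasing — then $\{P:\Delta_{k}(P,P_{e})\le 0\}$ is an initial segment $\{P\le h^{m(k)}(P^{*})\}$ — and statement (2) to showing that, for fixed $P_{k-1}$, $P_{e}\mapsto\Delta_{k}(P,P_{e})$ is decreasing, which yields the terminal segment $\{P_{e}\ge h^{m_{e}(k)}(P^{*})\}$.

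I would split $\Delta_{k}$ into its one-stage part $c_{k}(P,P_{e},1)-c_{k}(P,P_{e},0)$ and its future part $\sum_{ij}\alpha_{ij}V_{k+1}(s_{ij})$ with $\alpha_{ij}\triangleq\mathbb{P}_{ij}(1)-\mathbb{P}_{ij}(0)$. By the explicit form of $c_{k}$, the one-stage part equals $\mathcal{C}+\beta\lambda(1-\epsilon_{1})\big(tr(h(P))-tr(P^{*})\big)-(1-\beta)\lambda_{e}(1-\epsilon_{2})\big(tr(h(P_{e}))-tr(P^{*})\big)$, which by Lemma 1 is increasing in $P$ and decreasing in $P_{e}$. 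For the future part, $\sum_{ij}\alpha_{ij}=0$ and the product form of the kernel yields $\alpha_{00}\ge 0$, $\alpha_{00}+\alpha_{01}=\lambda(1-\epsilon_{1})\ge 0$, and $\alpha_{00}+\alpha_{10}=\lambda_{e}(1-\epsilon_{2})\ge 0$. For part (1), with $P_{e}$ fixed, only $V_{k+1}(s_{00})$ and $V_{k+1}(s_{01})$ move with $P$, both increasingly by Theorem \ref{t2}; a step up in $P$ changes the future part by $\alpha_{00}\delta_{00}+\alpha_{01}\delta_{01}$ with $\delta_{00},\delta_{01}\ge 0$ the two increments, so when $\alpha_{01}\ge 0$ the future part is increasing and, with the increasing one-stage part, $\Delta_{k}(\cdot,P_{e})$ is increasing. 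Part (2) is the mirror image: only $V_{k+1}(s_{00})$ and $V_{k+1}(s_{10})$ move with $P_{e}$, decreasingly, and when $\alpha_{10}\ge 0$ the future part is decreasing in $P_{e}$.

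The obstacle is that $\alpha_{01}$ (and, for (2), $\alpha_{10}$) may be negative — e.g. in the packet-erasure limit $\epsilon_{1}=\epsilon_{2}=0$, $\alpha_{01}=-(1-\lambda)\lambda_{e}$ and $\alpha_{10}=-\lambda(1-\lambda_{e})$ — and then Theorem \ref{t2} alone does not settle the sign of $\alpha_{00}\delta_{00}+\alpha_{01}\delta_{01}$. Using $\alpha_{00}+\alpha_{01}=\lambda(1-\epsilon_{1})$ one rewrites it as $\lambda(1-\epsilon_{1})\,\delta_{00}+|\alpha_{01}|(\delta_{00}-\delta_{01})$, so the missing ingredient is the cross-increment comparison $\delta_{00}\ge\delta_{01}$, i.e. the first-argument increment of $V_{k+1}$ is at least as large at the larger value of the second argument — a supermodularity property. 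Part (2) needs the matching comparison of second-argument decrements, which is a submodularity property; there is no contradiction, because in part (1) the first-coordinate pair of $V_{k+1}$ always lies strictly above $P^{*}$ while the second-coordinate pair always contains $P^{*}$, and exactly the reverse holds in part (2), so the two inequalities live on disjoint families of ``tiles'' of $\mathbf{S}\times\mathbf{S}$. Granting the relevant cross-increment inequality, $\alpha_{00}\delta_{00}+\alpha_{01}\delta_{01}\ge 0$ (resp. its mirror), giving $\Delta_{k}$ the required monotonicity.

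The hard part is therefore to show that these cross-increment properties propagate through \eqref{e1}. The one-stage cost is separable and harmless, and the product-form monotone transition kernel preserves them under expectation, but the outer minimization over $a$ does not preserve such properties for arbitrary function pairs: one must exploit that the switching set $\{\Delta_{k}\le 0\}$ is order-compatible, which is precisely the threshold structure being proved. Consequently, for each of the two directions, the monotonicity of $\Delta_{k}$ and the corresponding cross-increment property of $V_{k}$ should be established together by backward induction with trivial base case $V_{N+1}\equiv 0$; this interlocking induction is the crux, everything else being the bookkeeping above, and is most cleanly carried in the appendix alongside Theorem \ref{t2}.
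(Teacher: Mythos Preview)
Your decomposition of $\Delta_{k}$ and treatment of the one-stage part coincide with the paper's, and you correctly isolate the obstruction: when $\alpha_{01}$ (respectively $\alpha_{10}$) is negative, Theorem~\ref{t2} alone does not control the sign of the future part.

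Where you diverge is in the remedy. You propose to establish a cross-increment (supermodularity) property of $V_{k+1}$ and to carry it jointly with the threshold structure through the recursion. The paper does \emph{not} attempt any second-order property of $V_{k+1}$. Instead it strengthens the first-order statement: writing $p_{1}=\alpha_{00}$, $p_{2}=\alpha_{01}$, it proves by backward induction that
\[
P\ \longmapsto\ p_{1}\,V_{k+1}\bigl(h^{n}(P),P_{e}\bigr)+p_{2}\,V_{k+1}\bigl(h^{n}(P),P_{e}'\bigr)
\]
is increasing for \emph{all} $n\in\mathbb{N}$ and \emph{all} $P_{e},P_{e}'\in\mathbf{S}$, with the specific constants $p_{1},p_{2}$ fixed throughout. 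Generalizing over $n$ and over the second-coordinate pair is precisely what makes the induction close: expanding $V_{k}$ via the Bellman equation replaces the second coordinates by $h(P_{e}),P^{*},h(P_{e}'),P^{*}$ and bumps $h^{n}$ to $h^{n+1}$, all of which fall under the hypothesis at stage $k{+}1$. Part~(2) is handled symmetrically with $p_{1},p_{3}$ in place of $p_{1},p_{2}$. This is more direct than your route and avoids the delicate question of whether the outer $\min$ preserves supermodularity.

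One correction to your analysis: part~(2) does \emph{not} require submodularity. Redoing your rewrite with $\alpha_{00}+\alpha_{10}=\lambda_{e}(1-\epsilon_{2})$ and $d_{ij}\le 0$ the second-argument changes gives the requirement $d_{00}\le d_{10}$, i.e.\ the magnitude of the second-argument decrement of $V_{k+1}$ is larger at the larger first argument $h(P)$ than at $P^{*}$. That is the \emph{same} supermodularity as in part~(1), so there is no tension between the two parts and the ``disjoint tiles'' discussion is unnecessary. If you pursue your route you would need a single increasing-differences property of $V_{k}$; the paper's strengthened-hypothesis induction bypasses this entirely.
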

\begin{proof}
See Appendix \ref{a2}.
\end{proof}
\begin{remark}
	Theorem \ref{t3} can be viewed in an intuitive way, i.e., (1) shows that the optimal policy is to transmit the packet without encryption to Bob when $ P_{k-1} $ is large, as we want to reduce $P_{k}$ but the encryption makes the arrival rate smaller. For (2), it can be understand as that it is more efficient to encrypt the message when $ P_{e,k-1} $ is large, since we want $ P_{e,k} $ to increase even further.
\end{remark}

\subsection{Eavesdropper State Unknown at Remote Estimator}\label{un}
In real situations, the malicious eavesdropper would hide itself from being detected by the remote estimator as far as possible. Therefore, it is difficult to know the eavesdropper's error covariance. Here we assume that the remote estimator knows the leakage probability $ \lambda_{e} $ from previous measurements, but is not aware of the actual realization of $ \gamma_{e,k} $. This can be viewed as a partially observable MDP (POMDP) problem. This POMDP can be converted to a completely observable MDP using belief vector states.

Define the belief vector $ \pi_{e,k} $, which represents the probability distribution of $ P_{e,k} $ given the encryption schedule as
\begin{equation}
	\pi_{e,k}\triangleq\begin{bmatrix}
	\pi_{e,k}^{0}\\ 
	\pi_{e,k}^{1}\\ 
	\colon\\ 
	\pi_{e,k}^{N}
	\end{bmatrix}=\begin{bmatrix}
	\mathbb{P}(P_{e,k}=P^{*}|a_{1},\ldots,a_{k})\\ 
	\mathbb{P}(P_{e,k}=h(P^{*})|a_{1},\ldots,a_{k})\\ 
	\colon\\ 
	\mathbb{P}(P_{e,k}=h^{N}(P^{*})|a_{1},\ldots,a_{k})
	\end{bmatrix} .
\end{equation}
Denote the set of all possible $ \pi_{e,k} $'s as $ \Pi_{e}\subseteq R^{N+1} $. By our assumption, we have $ P_{e,0}=P^{*} $ and $ 	\pi_{e,0}=\begin{bmatrix}
1& 0&\ldots& 0\end{bmatrix}^{T} $.

We can obtain a recursive relationship for $\pi_{e,k} $ as
\begin{equation}
\pi_{e,k}=\Phi(\pi_{e,k-1},a_{k}),
\end{equation}
where
\begin{align*}
	&\Phi(\pi_{e},a)\triangleq\\
	&\left\lbrace
	\begin{array}{l}
	\begin{bmatrix}
	\lambda_{e}& (1-\lambda_{e})\pi_{e}^{0}&\ldots& (1-\lambda_{e})\pi_{e}^{N-1}\end{bmatrix}^{T}, \text{ if }a=0 ,\\
	\begin{bmatrix}
	\epsilon_{2}\lambda_{e}& (1-\epsilon_{2}\lambda_{e})\pi_{e}^{0}&\ldots& (1-\epsilon_{2}\lambda_{e})\pi_{e}^{N-1}\end{bmatrix}^{T} ,\text{ if } a=1.
	\end{array}\right. 
\end{align*}

Different from Section \ref{s3}, Bob will make the decision $ a_{k } $ based on its own $ P_{k-1} $ and the belief vector $ \pi_{e,k-1} $ since $ P_{e,k} $ is unknown to Bob.
Therefore, in this subsection, the discrete time MDP problem is the following
\begin{enumerate}
	\item The state $ s_{k}\triangleq(P_{k-1},\pi_{e,k-1}) $  at time $ k $ belongs to the state space $ \mathbb{S}\subset\mathbf{S}\times \Pi_{e} $.
	\item The action $ a_{k}\in\{0,1\} $ is in the action space $ \mathbb{A} $.
	\item Denote $ s\triangleq(P,\pi_{e}) $, $ s'\triangleq(P^{+},\pi_{e}^{+}) $. The state transition probability distribution $\mathbb{P}(s'|s,a) $ is
	\begin{equation}
	\begin{split}
	&\mathbb{P}(P^{+},\pi_{e}^{+}|s,a)=\\
	&\left\lbrace
	\begin{array}{l}
	\lambda,\text{if}\, a=0,P^{+}=P^{*},\text{ if }\pi_{e}^{+}=\Phi(\pi_{e},0),\\
	1-\lambda,\text{if}\, a=0,P^{+}=h(P),\text{ if }\pi_{e}^{+}=\Phi(\pi_{e},0),\\
	\epsilon_{1}\lambda,\text{if}\, a=1,P^{+}=P^{*},\text{ if }\pi_{e}^{+}=\Phi(\pi_{e},1),\\
	1-\epsilon_{1}\lambda,\text{if}\, a=1,P^{+}=h(P),\text{ if }\pi_{e}^{+}=\Phi(\pi_{e},1).
	\end{array}\right.
	\end{split}
	\end{equation}
	\item The one-stage cost function at time $ k$ is
	\begin{equation}
	\begin{split}
		&c_{k}(P_{k-1},\pi_{e,k-1},a_{k})
		\triangleq\beta\mathbb{E} [tr(P_{k})|P_{k-1},\pi_{e,k-1},a_{k}]\\&-(1-\beta)\sum\limits_{i=0}^{N}tr(h^{i}(P^{*}))\pi_{e,k}^{i}
		+ a_{k}\mathcal{C},
	\end{split}	
	\end{equation}
	where	$ \mathbb{E}[tr(P_{k})|P_{k-1},\pi_{e,k-1},a_{k}]	=(a_{k}\epsilon_{1}\lambda+(1-a_{k})\lambda)tr(P^{*})
	+(1-a_{k}\epsilon_{1}\lambda+(1-a_{k})\lambda)tr(h(P_{k-1})) $
	and  $ \pi_{e,k}
	 =\Phi(\pi_{e,k-1},a_{k}) $.
	
	\begin{remark}
		From Lemma 1, we obtain that  one-stage cost function  $ c_{k}(P_{k-1},\pi_{e,k-1},a_{k}) $ increases in $ P_{k-1} $.
	\end{remark} 
\end{enumerate}

Then, problem \eqref{e9} is equal to 
\begin{equation}\label{e26}
\min\limits_{a_{k}\in \{0,1\}}\sum\limits_{k=1}^{N}	c_{k}(P_{k-1},P_{e,k-1},a_{k}).
\end{equation} 

Similar to Section \ref{s3}, we will make the action $ a_{k+1} $ based on $ s_{k} $ instead of $ H_{k} $ by the Markov property proved in the same way as Theorem \ref{t1}, to minimize the total integrated cost. Define the optimality equation (Bellman equation) as
\begin{align*}
&V_{N+1}(P,\pi_{e})=0, \\
&V_{k}(P,\pi_{e})= \min\limits_{a\in \{0,1\}}\{c_{k}(P,\pi_{e},a)\\
&+\mathbb{P}(h(P),\Phi(\pi_{e},a)|s,a)V_{k+1}(h(P),\Phi(\pi_{e},a))\\
&+\mathbb{P}(P^{*},\Phi(\pi_{e},a)|s,a)V_{k+1}(P^{*},\Phi(\pi_{e},a))\},
\end{align*}
where $ s=(P,\pi_{e}) $ and  $ V_{k}(\cdotp,\cdotp) $ for $ k=1,2,\ldots,N $ is a real valued function.

%
\begin{theorem}
	$ V_{k}(P,\pi_{e}) $ is an increasing function in $ P $.  
\end{theorem}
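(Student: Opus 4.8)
The plan is a backward induction on the time index $ k $, paralleling the proof of Theorem~\ref{t2} but tracking monotonicity only in the first coordinate, since the eavesdropper covariance has been replaced by the belief vector $ \pi_{e} $. The base case is immediate: $ V_{N+1}(P,\pi_{e})\equiv 0 $ is (weakly) increasing in $ P $. For the inductive step, assume $ V_{k+1}(\cdot,\pi_{e}) $ is increasing in $ P $ for every $ \pi_{e}\in\Pi_{e} $; the goal is to deduce the same for $ V_{k} $.

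Two elementary monotonicity facts are used. First, the Lyapunov operator is monotone, $ X\leq Y\Rightarrow h(X)\leq h(Y) $ (since $ h(Y)-h(X)=A(Y-X)A'\geq 0 $), hence $ tr(h(\cdot)) $ is increasing on the positive semidefinite cone. Second, because $ \pi_{e,k}=\Phi(\pi_{e,k-1},a_{k}) $ does not depend on $ P_{k-1} $ and the coefficient of $ tr(h(P_{k-1})) $ inside $ \mathbb{E}[tr(P_{k})\mid P_{k-1},\pi_{e,k-1},a_{k}] $ equals $ 1-\mathbb{P}(\gamma_{k}=1)\geq 0 $, the one-stage cost $ c_{k}(P,\pi_{e},a) $ is increasing in $ P $ for each fixed $ \pi_{e} $ and $ a $ (this is the content of the Remark preceding the theorem).

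Now fix $ \pi_{e} $ and $ a\in\{0,1\} $ and write the bracketed quantity in the Bellman recursion as
\begin{equation*}
Q_{k}(P,\pi_{e},a)\triangleq c_{k}(P,\pi_{e},a)+\mathbb{P}(h(P),\Phi(\pi_{e},a)\mid s,a)\,V_{k+1}(h(P),\Phi(\pi_{e},a))+\mathbb{P}(P^{*},\Phi(\pi_{e},a)\mid s,a)\,V_{k+1}(P^{*},\Phi(\pi_{e},a)),
\end{equation*}
with $ s=(P,\pi_{e}) $. The key structural observation is that the two transition probabilities depend only on $ a $ (they are $ 1-\epsilon_{1}\lambda,\ \epsilon_{1}\lambda $ if $ a=1 $ and $ 1-\lambda,\ \lambda $ if $ a=0 $), the successor belief $ \Phi(\pi_{e},a) $ does not involve $ P $, and the successor covariance is either $ h(P) $ or the constant $ P^{*} $. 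Hence for $ P\leq\tilde{P} $: monotonicity of $ h $ gives $ h(P)\leq h(\tilde{P}) $, so the induction hypothesis gives $ V_{k+1}(h(P),\Phi(\pi_{e},a))\leq V_{k+1}(h(\tilde{P}),\Phi(\pi_{e},a)) $; the $ P^{*} $-term is unchanged; and $ c_{k} $ is increasing in $ P $. Therefore $ Q_{k}(\cdot,\pi_{e},a) $ is increasing in $ P $ for each $ a $.

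Finally, $ V_{k}(P,\pi_{e})=\min_{a\in\{0,1\}}Q_{k}(P,\pi_{e},a) $ is a pointwise minimum of two functions each increasing in $ P $, and a minimum of increasing functions is increasing, which closes the induction. I do not expect a real obstacle here: the argument hinges only on the clean decoupling of the dynamics — $ P $ transitions through $ \gamma_{k} $ while the belief $ \pi_{e} $ updates deterministically through $ a_{k} $, so the belief coordinate never couples into the ordering — and on the bookkeeping fact that in the recursion $ V_{k+1} $ is evaluated at the two states $ (h(P),\Phi(\pi_{e},a)) $ and $ (P^{*},\Phi(\pi_{e},a)) $ that share the common belief $ \Phi(\pi_{e},a) $.
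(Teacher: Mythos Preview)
Your proof is correct and follows essentially the same backward-induction route as the paper, which simply points back to the proof of Theorem~\ref{t2}. The only cosmetic difference is that you conclude via ``a pointwise minimum of increasing functions is increasing,'' whereas the paper's Theorem~\ref{t2} proof picks the optimal action at the larger state and compares; these are equivalent standard devices.
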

\begin{proof}
	For fixed $ \pi_{e} $, we can use the same induction argument as in the proof of Theorem \ref{t2} to prove that the optimal solution $ V_{k}(P,\pi_{e}) $ is an increasing function in $ P $.
\end{proof}
\begin{theorem}\label{t6}
	For a fixed $ \pi_{e,k-1} $, the optimal solution to problem \eqref{e26} is a threshold policy on $ P_{k-1} $ of the form
	\begin{equation}
	a_{k}^{*}(P_{k-1},\pi_{e,k-1})=\left\lbrace
	\begin{array}{l}
	1, \text{if} \, P_{k-1}\leq h^{m(k)}(P^{*}), \\
	0, \text{otherwise} .
	\end{array}\right. 
	\end{equation}
	where the threshold $ m(k) \in\mathbb{N}  $ depends on $ k $ and $ \pi_{e,k-1} $
\end{theorem}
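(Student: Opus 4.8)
\textit{Proof proposal.} The plan is to mimic the argument of Appendix~\ref{a2} (the proof of Theorem~\ref{t3}(1)), now with the belief vector $\pi_{e,k-1}$ playing the role that $P_{e,k-1}$ played there, and with the monotonicity of $V_k(\cdot,\pi_e)$ in $P$ (the theorem just above) replacing Theorem~\ref{t2}. Fix a belief $\pi_e$ and a stage $k$, and let $Q_k(P,\pi_e,a)$ denote the action value of taking action $a$ at state $(P,\pi_e)$ and continuing optimally, so that $V_k=\min_a Q_k(\cdot,\cdot,a)$; concretely the continuation term is $\lambda V_{k+1}(P^*,\Phi(\pi_e,0))+(1-\lambda)V_{k+1}(h(P),\Phi(\pi_e,0))$ for $a=0$ and $\epsilon_1\lambda V_{k+1}(P^*,\Phi(\pi_e,1))+(1-\epsilon_1\lambda)V_{k+1}(h(P),\Phi(\pi_e,1))$ for $a=1$. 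Then $a_k^*(P,\pi_e)=1$ exactly when $\Delta_k(P):=Q_k(P,\pi_e,1)-Q_k(P,\pi_e,0)\le 0$ (ties broken toward encryption). Since $\mathbf{S}=\{P^*,h(P^*),h^2(P^*),\dots\}$ is totally ordered by Lemma~1, it suffices to show $\Delta_k(\cdot)$ is non-decreasing along this chain: then $\{P\in\mathbf{S}:\Delta_k(P)\le 0\}$ is an initial segment of the form $\{P\le h^{m(k)}(P^*)\}$, which is precisely the asserted threshold structure (the degenerate ``never/always encrypt'' cases being absorbed by letting the threshold index fall outside $\{0,\dots,N\}$), and $m(k)$ then depends only on $k$ and the fixed $\pi_{e,k-1}=\pi_e$.

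Next I would isolate the $P$-dependent part of $\Delta_k(P)$. The one-stage cost contributes $\beta\big[(1-\epsilon_1\lambda)-(1-\lambda)\big]tr(h(P))=\beta\lambda(1-\epsilon_1)\,tr(h(P))$, which is non-negative and non-decreasing in $P$ by Lemma~1; the continuation contributes $(1-\epsilon_1\lambda)\,V_{k+1}(h(P),\Phi(\pi_e,1))-(1-\lambda)\,V_{k+1}(h(P),\Phi(\pi_e,0))$; every remaining term depends only on $\pi_e$ and is constant in $P$. Hence $\Delta_k(\cdot)$ is non-decreasing as soon as the one-step increment (along $\mathbf{S}$) of $P\mapsto(1-\epsilon_1\lambda)V_{k+1}(P,\Phi(\pi_e,1))-(1-\lambda)V_{k+1}(P,\Phi(\pi_e,0))$ is non-negative. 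Using $1-\epsilon_1\lambda\ge 1-\lambda\ge 0$ together with the fact that $V_{k+1}(\cdot,\pi_e')$ is increasing in $P$ for every $\pi_e'$ (the preceding theorem), this reduces to comparing the $P$-increments of $V_{k+1}$ at the two \emph{deterministically} propagated beliefs $\Phi(\pi_e,1)$ and $\Phi(\pi_e,0)$.

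That last comparison is the crux, and I would handle it by strengthening the backward induction on $k$: besides the statement that $V_k(\cdot,\pi_e)$ is increasing in $P$, I would carry the auxiliary hypothesis that $(1-\epsilon_1\lambda)\big[V_k(h(P),\Phi(\pi_e,1))-V_k(P,\Phi(\pi_e,1))\big]\ge(1-\lambda)\big[V_k(h(P),\Phi(\pi_e,0))-V_k(P,\Phi(\pi_e,0))\big]$ for all $P$ and all $\pi_e$ (or a convenient variant of it). The base case $k=N+1$ is immediate since $V_{N+1}\equiv 0$. For the inductive step one writes $V_k=\min\{Q_k(\cdot,\cdot,0),Q_k(\cdot,\cdot,1)\}$ and uses that, on each fixed active branch, the coefficient multiplying $V_{k+1}(h(P),\cdot)$ — and the coefficient of $tr(h(P))$ in $c_k$ — is larger for $a=1$ than for $a=0$; the only delicate point is at the $P$-values where the active branch switches, and there the already-established stage-$(k+1)$ threshold structure (which guarantees the optimal action can only switch from $1$ to $0$ as $P$ grows, never back) lets one bound the increments on either side of the switch. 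I expect precisely this bookkeeping around the branch-switch points — the only places where the $P$-increment of $V_{k+1}$ is not simply a convex combination of $tr$-increments — to be the main obstacle; once it is settled, combining it with the elementary $tr$-term estimate from the previous paragraph shows $\Delta_k(\cdot)$ is non-decreasing, and the threshold policy follows.
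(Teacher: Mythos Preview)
Your overall strategy matches the paper's: reduce the threshold claim to monotonicity of $\phi_k(P,\pi_e)=Q_k(P,\pi_e,1)-Q_k(P,\pi_e,0)$ in $P$, strip off the (harmless) $\beta\lambda(1-\epsilon_1)\,tr(h(P))$ term, and carry an auxiliary monotonicity statement about $V_{k+1}$ through a backward induction. Where you diverge is in the \emph{form} of that auxiliary statement and in how you handle the minimum inside $V_k$.

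The paper's inductive hypothesis is \emph{decoupled}: it asks that
\[
(1-\epsilon_1\lambda)\,V_{k+1}(h^n(P),\pi_e)\;-\;(1-\lambda)\,V_{k+1}(h^n(P),\pi_e')
\]
be increasing in $P$ for \emph{all} $\pi_e,\pi_e'\in\Pi_e$ and all $n\in\mathbb{N}$, not merely for the coupled pair $\big(\Phi(\pi_e,1),\Phi(\pi_e,0)\big)$ arising from a common $\pi_e$. This strengthening is exactly what makes the induction close: after one Bellman step the two beliefs become $\Phi(\Phi(\pi_e,1),a)$ and $\Phi(\Phi(\pi_e,0),a')$ for possibly different optimal actions $a,a'$, and these are in general \emph{not} of the form $\Phi(\tilde\pi_e,1),\Phi(\tilde\pi_e,0)$ for any common $\tilde\pi_e$. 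Your hypothesis as written therefore does not self-propagate. Your parenthetical ``or a convenient variant of it'' is the right instinct, and the variant you want is precisely the paper's decoupled one (this is the analogue of the ``slightly more general statement'' device in Appendix~\ref{a2}).

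With the decoupled hypothesis in hand, your ``branch-switch'' bookkeeping becomes unnecessary. Following Appendix~\ref{a2}, the paper lower-bounds the combination of $V_k$-terms by passing to a single $\min_{a\in\{0,1\}}$ applied uniformly (using $V_k(s)\le Q_k(s,a)$ on the terms carrying the appropriate sign), so that no case split on the optimal action at each of the four states is needed. This is both shorter and sidesteps the delicate increment comparisons at the switch points that you flag as ``the main obstacle.''
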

\begin{proof}
	Using the similar method in the proof of Theorem \ref{t3} in Appendix \ref{a2}, it is sufficient to prove that	
	\begin{equation}
		(1-\epsilon_{1}\lambda)V_{k+1}(h^{n}(P),\pi_{e})-(1-\lambda)V_{k+1}(h^{n}(P),\pi_{e}')
	\end{equation}	
	is a increasing function of $ P $ for all $ k=0,1,\ldots,N $,  $ n\in \mathbb{N} $ and  $ \pi_{e},\pi_{e}'\in \Pi_{e} $.	
	Therefore, for fixed $ k $ and $ \pi_{e} $, $ m(k)=\min\{t\in \mathbb{N} :\phi_{k}(h^{t}(P^{*}),\pi_{e})\geq 0\} $. 
\end{proof}

Theorem \ref{t6} guarantees the threshold structure without knowledge of the eavesdropper's estimation error covariance.

\section{simulation}\label{s4}
In this section, we use numerical examples to illustrate our optimal policies. Consider a system with $ A=\begin{bmatrix}
1.5&0  \\ 
0&0.9 
\end{bmatrix}  $, $ C=\begin{bmatrix}
1& 0
\end{bmatrix}  $, $  Q=\begin{bmatrix}
0.5&0  \\ 
0&0.5 
\end{bmatrix}  $, $R=0.6 $ , $ \lambda=\lambda_{e}=0.7 $, $ \epsilon_{1}=0.9$ and $ \epsilon_{2}=0.18 $. The normalized encryption cost is $ \mathcal{C}=6 $.  Set the weighted parameter $ \beta=0.5 $. 

Consider a finite time horizon with $ N=10 $. 
Fig. \ref{fig_known_k_5} plots $ a_{k}^{*} $ for different values of $ P_{k-1}=h^{n}(P^{*}) $ and $ P_{e,k-1}=h^{n_{e}}(P^{*})  $ at time step $ k=5 $. Fig.  \ref{fig_known_k_10} plots $ a_{k}^{*} $ at time $ k=10 $. Red blocks represent $ a_{k}^{*} =1$, while white ones represent $ a_{k}^{*} =0$. It is shown in these two figures that the threshold structure of the optimal policy $ a_{k}^{*} $ in both $ P_{k-1} $ and $ P_{e,k-1} $. Meanwhile, the optimal policy is dependent on time $ k $ as proved in Theorem \ref{t3}. 

\begin{figure}[!t]
	
	\subfigure[$ k=5 $.]{\label{fig_known_k_5}
		\begin{minipage}[!t]{0.48\linewidth}		
		\centering
		\includegraphics[width=1.6in]{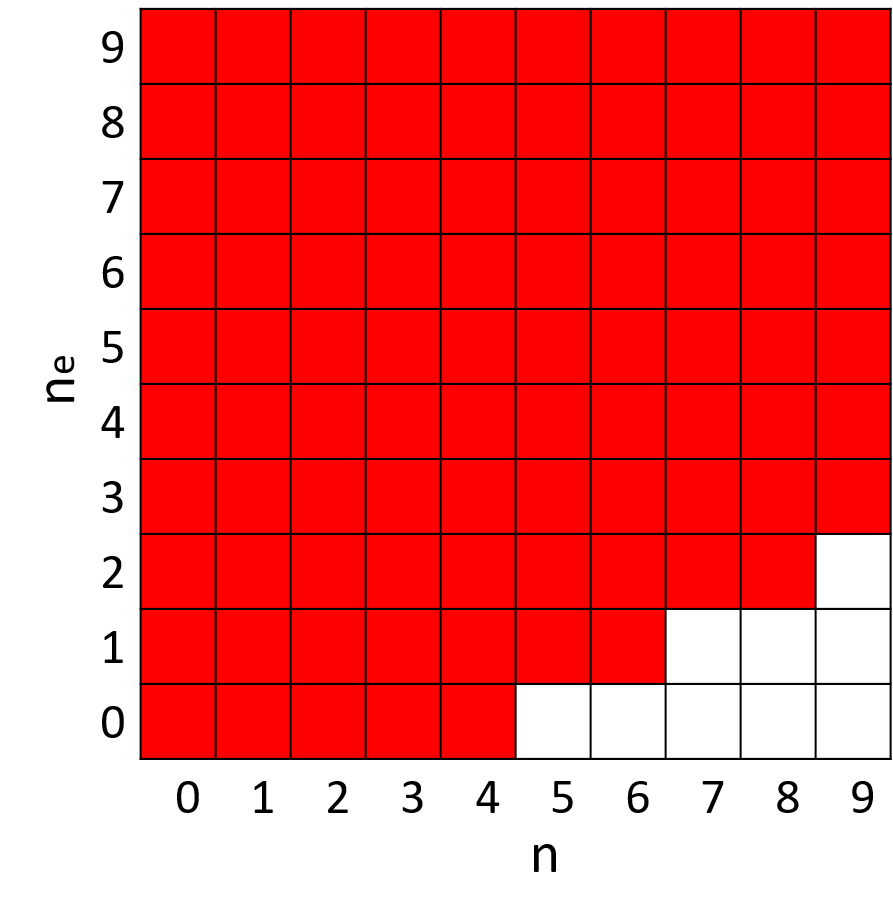}
		\end{minipage}%
	}
	\subfigure[$ k=10 $.]{\label{fig_known_k_10}
		\begin{minipage}[!t]{0.48\linewidth}		
			\centering
			\includegraphics[width=1.6in]{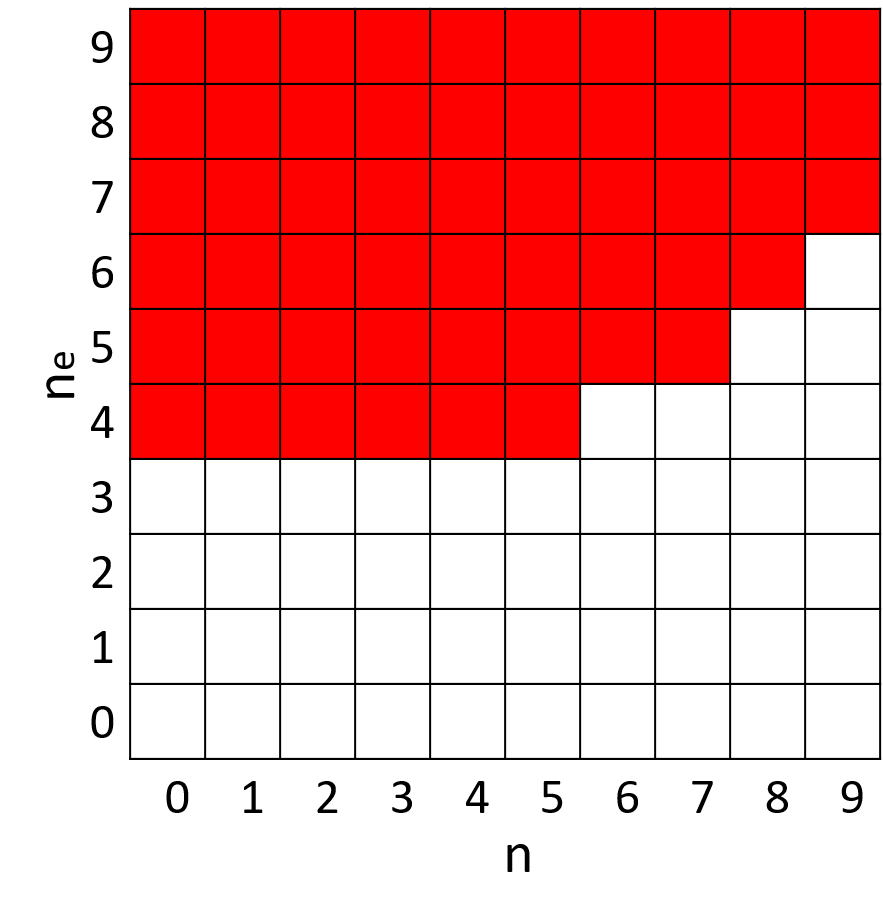}
		\end{minipage}%
	}
	\caption{Optimal policy at different time steps. 
	\label{f2}}
\end{figure}

 Furthermore, TABLE. \ref{table1} makes a comparison between the following four different encryption methods
 \begin{enumerate}
 	\item $ \theta^{1} $: always transmit the packet directly to the remote estimator without encryption, i.e., $ a_{k}=0 $, for all $ k $;
 	\item  $ \theta^{1} $: always encrypt the packet before each transmission, i.e., $ a_{k}=1 $, for all $ k $;
 	\item $ \theta^{*1} $: the optimal strategy derived from Section \ref{s3} with knowledge of the eavesdropper's estimation error covariance;
 	\item $ \theta^{*2} $: the optimal strategy derived from Section \ref{un} without knowledge of the eavesdropper's estimation error covariance.
 \end{enumerate} 
Consider the finite time horizon with $ N=6 $.  We run 1000 Monte Carlo tests.  We can see from TABLE. \ref{table1}  that the optimal encryption strategy reduces the total integrated cost significantly compared with using no encryption method and is better than encrypting all the messages. Furthermore, if we cannot obtain the exact error covariance of the eavesdropper, the optimal cost is larger than that when the error covariance is known to the remote estimator. 
 
 \begin{table}[h]
 	\caption{A comparison between encryption strategies}
 	\label{table1}
 	\begin{center}
 		\begin{tabular}{|c||c|c|c|}
 			\hline
 			 & $ \sum\limits_{k=1}^{N}\mathbb{E}[ tr(P_{k})] $&$ \sum\limits_{k=1}^{N}\mathbb{E}[ tr(P_{e,k})] $ & $ J_{k} $\\
 			\hline
 			$ \theta^{1} $ & 22.2487 & 22.2657 & -0.0085\\
 			\hline
 			$ \theta^{2} $& 24.1176 & 118.8861 & -11.3843\\ 			
 			\hline
 			$ \theta^{*1} $& 23.6582 & 114.0609 & -18.0513\\ 			
 			\hline
 			$ \theta^{*2} $& 23.8272& 118.2655& -12.3692\\ 			
 			\hline
 		\end{tabular}
 	\end{center}
 \end{table}
 
\section{Conclusion}\label{s5}
In this paper, we consider an optimal encryption schedule for a remote state estimation system in the presence of an eavesdropper. Our objective is to determine when to encrypt transmitted messages to minimize a linear combination of error covariance at the remote estimator and the eavesdropper, taking into account the cost of the encryption process.  This problem is shown to be formulated as a MDP, either with or without knowledge of the estimation error covariance at the eavesdropper. The optimal policy is proved to have a threshold structure in each situation. 

The current setup only focuses on the problem of a finite time horizon where the state space is finite.  It would be interesting to consider situations with a infinite time horizon.

\appendix
\subsection{Proof of Theorem \ref{t2}}\label{a1}
As $ V_{N+1}(P,P_{e})=0 $, it is trivial to see that $ V_{N+1}(P,P_{e}) $ is an increasing function in $ P $. Therefore, we prove the monotonicity using a backward induction way.

Assume that $ V_{t}(P,P_{e}) $ is increasing for $ t=k+1,\ldots,N+1 $, then we only need to prove $ V_{k}(P,P_{e}) $ is an increasing function in $ P $. We choose $ P'\geq P $,  one has $ h(P')\geq h(P)$. Denote $ s'\triangleq(P',P_{e}),s'_{00}\triangleq(h(P'),h(P_{e}) ),s'_{01}\triangleq(h(P'),P^{*} ) $. As function $ c_{k} $ increases in $ P $ and $\mathbb{P}(s'|s,a) $ is only dependent on action $ a $, we have
\begin{equation*}
\begin{split}
&V_{k}(P,P_{e})=\min\limits_{a\in \{0,1\}}\{c_{k}(P,P_{e},a)+\mathbb{P}_{00}(a)V_{k+1}(s_{00})+\\
&\mathbb{P}_{01}(a)V_{k+1}(s_{01})+\mathbb{P}_{10}(a)V_{k+1}(s_{10})+\mathbb{P}_{11}(a)V_{k+1}(s_{11})\}\\
&\leq c_{k}(P,P_{e},a_{s'}^{*})+\mathbb{P}_{00}(a_{s'}^{*})V_{k+1}(s_{00})+\mathbb{P}_{01}(a_{s'}^{*})V_{k+1}(s_{01})\\
&+\mathbb{P}_{10}(a_{s'}^{*})V_{k+1}(s_{10})+\mathbb{P}_{11}(a_{s'}^{*})V_{k+1}(s_{11})\\
&\leq c_{k}(P',P_{e},a_{s'}^{*})+\mathbb{P}_{00}(a_{s'}^{*})V_{k+1}(s_{00}')+\mathbb{P}_{01}(a_{s'}^{*})V_{k+1}(s_{01}')\\
&+\mathbb{P}_{10}(a_{s'}^{*})V_{k+1}(s_{10}')+\mathbb{P}_{11}(a_{s'}^{*})V_{k+1}(s_{11})=	V_{k}(P',P_{e})	.
\end{split}
\end{equation*}
The proof is completed.

We can use the same method to prove that $ V_{k}(P,P_{e}) $ is a decreasing function in $ P_{e} $. The proof is omitted.

\subsection{Proof of Theorem \ref{t3}}\label{a2}
Denote the difference of  $ V_{k}(P,P_{e}) $ when $ a^{*}=1 $ and $ a^{*}=0 $ as $ \phi_{k}(P,P_{e}) $. It can be calculated directly that
	\begin{align*}
	&\phi_{k}(P,P_{e})=  \beta(1-\epsilon_{1})\lambda(tr(h(P))-tr(h(P')))-\\
	&(1-\beta)(1-\epsilon_{2})\lambda_{e}(tr(h(P_{e}))-tr(h(P'))) +\mathcal{C}+\\
	&p_{1}V_{k+1}(s_{00})+p_{2}V_{k+1}(s_{01})+p_{3}V_{k+1}(s_{10})+p_{4}V_{k+1}(s_{11}),
	\end{align*}
	where $ p_{1}\triangleq \mathbb{P}_{00}(1)-\mathbb{P}_{00}(0)$, $p_{2}\triangleq  \mathbb{P}_{01}(1)-\mathbb{P}_{01}(0) $, $p_{3}\triangleq \mathbb{P}_{10}(1)-\mathbb{P}_{10}(0) $, $p_{4}\triangleq  \mathbb{P}_{11}(1)-\mathbb{P}_{11}(0) $. If $ \phi_{k}(P,P_{e})\geq 0 $, the optimal strategy at time $ k $ is $ a_{k}^{*}=0 $, otherwise $ a_{k}^{*}=1 $.
	
(1) It is equivalent to prove $  \phi_{k}(P,P_{e}) $ increases in $ P $ for fixed $ P_{e} $. Considering elements which relate to $ P $ in $ \phi_{k}(P,P_{e}) $, for $ P\geq P' $, we have 
\begin{align*}
&\phi_{k}(P,P_{e})-\phi_{k}(P',P_{e})=\beta(1-\epsilon_{1})\lambda(tr(h(P))-tr(h(P')))\\
&+p_{1}[V_{k+1}(h(P),h(P_{e}))-V_{k+1}( h(P'),h(P_{e}))]\\
&+p_{2}[V_{k+1}(h(P),P^{*})-V_{k+1}( h(P'),P^{*})].
\end{align*}

From Lemma 1, the first element $ \beta(1-\epsilon_{1})\lambda(tr(h(P))-tr(h(P')))\geq 0 $, it suffices to prove that
\begin{equation}
 p_{1}V_{k+1}(h(P),h(P_{e}))+p_{2}V_{k+1}(h(P),P^{*}),
\end{equation}	
is an increasing function of $ P $ for all $ k $. We will prove this statement using a backward induction way. We prove the slightly more general statement that	$ 
p_{1}V_{k+1}(h^{n}(P),P_{e})+p_{2}V_{k+1}(h^{n}(P),P_{e}') $
is an increasing function of $ P $ for all $ k=0,1,\ldots,N $, $ n\in \mathbb{N} $ and $ P,P_{e},P_{e}'\in \mathbf{S} $.

As $ V_{N+1}(P,P_{e})=0 $, it is trivial to see that the statement holds for $ k=N  $. Assume that $ \forall P\geq P'$,
\begin{equation}\label{e19}
\begin{split}
&p_{1}V_{t+1}(h^{n}(P),P_{e})+p_{2}V_{t+1}(h^{n}(P),P_{e}')\\
&-p_{1}V_{t+1}(h^{n}(P'),P_{e})-p_{2}V_{t+1}(h^{n}(P'),P_{e}')\geq 0
\end{split}		
\end{equation}
holds for $ t=k+1,\ldots,N $. Then	
\begin{align*}
&p_{1}V_{k}(h^{n}(P),P_{e})+p_{2}V_{k}(h^{n}(P),P_{e}')\\
&-p_{1}V_{k}(h^{n}(P'),P_{e})-p_{2}V_{k}(h^{n}(P'),P_{e}')\\&\geq\min\limits_{a\in \{0,1\}}	
\{p_{1}[c_{k}(h^{n}(P),P_{e},a)+\mathbb{P}_{00}(a)V_{k+1}(h^{n+1}(P),h(P_{e}))\\
&+\mathbb{P}_{01}(a)V_{k+1}(h^{n+1}(P),P^{*})-\mathbb{P}_{01}(a)V_{k+1}(h^{n+1}(P'),P^{*})\\
&-c_{k}(h^{n}(P'),P_{e},a)-\mathbb{P}_{00}(a)V_{k+1}(h^{n+1}(P'),h(P_{e}))]\\
&+p_{2}[c_{k}(h^{n}(P),P_{e}',a)+\mathbb{P}_{00}(a)V_{k+1}(h^{n+1}(P),h(P_{e}'))\\
&+\mathbb{P}_{01}(a)V_{k+1}(h^{n+1}(P),P^{*})-\mathbb{P}_{01}(a)V_{k+1}(h^{n+1}(P'),P^{*})\\
&-\mathbb{P}_{00}(a)V_{k+1}(h^{n+1}(P'),h(P_{e}'))-c_{k}(h^{n}(P'),P_{e}',a)]\}\\
&\geq\min\limits_{a\in \{0,1\}}	\{p_{1}[\mathbb{P}_{00}(a)V_{k+1}(h^{n+1}(P),h(P_{e}))\\
&+\mathbb{P}_{01}(a)V_{k+1}(h^{n+1}(P),P^{*})-\mathbb{P}_{01}(a)V_{k+1}(h^{n+1}(P'),P^{*})\\
&-\mathbb{P}_{00}(a)V_{k+1}(h^{n+1}(P'),h(P_{e}))]+p_{2}[\mathbb{P}_{00}(a)\\
&\cdot V_{k+1}(h^{n+1}(P),h(P_{e}'))+\mathbb{P}_{01}(a)V_{k+1}(h^{n+1}(P),P^{*})\\
&-\mathbb{P}_{00}(a)V_{k+1}(h^{n+1}(P'),h(P_{e}'))\\
&-\mathbb{P}_{01}(a)V_{k+1}(h^{n+1}(P'),P^{*})]\}\\
&=\min\limits_{a\in \{0,1\}}\{\mathbb{P}_{00}(a)[p_{1}V_{k+1}(h^{n+1}(P),h(P_{e}))+p_{2}\\
&\cdot V_{k+1}(h^{n+1}(P),h(P_{e}'))-p_{1}V_{k+1}(h^{n+1}(P'),h(P_{e}))-p_{2}\\
&\cdot V_{k+1}(h^{n+1}(P'),h(P_{e}'))]+\mathbb{P}_{01}(a)[p_{1}V_{k+1}(h^{n+1}(P),P^{*}))\\
&+p_{2}V_{k+1}(h^{n+1}(P),P^{*}))-p_{1}V_{k+1}(h^{n+1}(P'),P^{*}))\\
&-p_{2}V_{k+1}(h^{n+1}(P'),P^{*}))]\}\geq 0
\end{align*}
where the first inequality holds since $ a $ which denotes $ a_{k+1} $ is determined by the function $ \phi_{k+1} $. Meanwhile, the second inequality holds since $ c_{k}(h^{n}(P),P_{e},a) $ increases in $ P $ and the last inequality holds by the induction hypothesis \eqref{e19} and $ \mathbb{P}_{00}(a), \mathbb{P}_{01}(a)\geq 0$ for $ \forall a\in\mathbb{A} $.

Therefore, for fixed $ k $ and $ P_{e} $, $ m(k)=\min\{t\in \mathbb{N} :\phi_{k}(h^{t}(P^{*}),P_{e})\geq 0\} $.

(2)	As $ -tr(h(P_{e}) )$ decreases in $ P_{e} $, it is equivalent to prove that	
$ p_{1}V_{k+1}(h(P),h(P_{e}))+p_{3}V_{k+1}(P^{*},h(P_{e}))  $
is a decreasing function of $  P_{e} $. Similar to the first part,  we prove by induction the slightly more general statement	
$ p_{1}V_{k+1}(P,h^{n}(P_{e}))+p_{3}V_{k+1}(P',h^{n}(P_{e})) $
is a decreasing function of $ P_{e} $ for all $ k=0,1,\ldots,N $, $ n\in \mathbb{N} $ and $ P,P',P_{e}\in \mathbf{S} $. The details are omitted.	





%

\bibliographystyle{IEEEtran}
\bibliography{ref}

\end{document}